\documentclass{style/vldb_style_sample/vldb}

\usepackage{graphicx}
\usepackage{balance}  

\usepackage{booktabs}
\usepackage{amsmath}
\usepackage{verbatim}
\usepackage{algorithmicx}
\usepackage[plain]{algorithm}
\usepackage{algpseudocode}
\usepackage{pgfplots}
\usepackage{pgfplotstable}
\usepackage[center]{subfigure}
\usepackage{url}

\usetikzlibrary{patterns,shapes,arrows,positioning,fit,decorations.pathreplacing}

\usepgfplotslibrary{groupplots}

\newlength{\abovecaptionskip}
\setlength{\abovecaptionskip}{0.1cm}

\newfont{\mycrnotice}{ptmr8t at 7pt}
\newfont{\myconfname}{ptmri8t at 7pt}

\usepackage{xparse}
\let\oldState\State
\RenewDocumentCommand{\State}{o}{
  \IfValueTF{#1}{\makeatletter\setcounter{ALG@line}{#1}\addtocounter{ALG@line}{-1}\makeatother}{}
  \oldState\ignorespaces%
}

\begin{document}

\newtheorem{lemma}{Lemma}
\newtheorem{theorem}{Theorem}
\newtheorem{corollary}{Corollary}
\newtheorem{example}{Example}
\newtheorem{assumption}{Assumption}
\newtheorem{remark}{Remark}

\newdef{definition}{Definition}
\newdef{scenario}{Scenario}


\title{Probably Approximately Optimal Query Optimization\thanks{This work was supported by ERC Grant 279804 and by a European Google PhD fellowship.}}



%
%
%
%

\numberofauthors{1} 

\author{
%
%
\alignauthor
Immanuel Trummer and Christoph Koch\\
			 \email{\{firstname\}.\{lastname\}@epfl.ch}\\
       \affaddr{\'Ecole Polytechnique F\'ed\'erale de Lausanne}
}

\maketitle

\begin{abstract}
Evaluating query predicates on data samples is the only way to estimate their selectivity in certain scenarios. Finding a guaranteed optimal query plan is not a reasonable optimization goal in those cases as it might require an infinite number of samples. We therefore introduce probably approximately optimal query optimization (PAO) where the goal is to find a query plan whose cost is near-optimal with a certain probability. We will justify why PAO is a suitable formalism to model scenarios in which predicate sampling and optimization need to be interleaved.

We present the first algorithm for PAO. Our algorithm is non-intrusive and uses standard query optimizers and sampling components as sub-functions. It is generic and can be applied to a wide range of scenarios. Our algorithm is iterative and calculates in each iteration a query plan together with a region in the selectivity space where the plan has near-optimal cost. It determines the confidence that the true selectivity values fall within the aforementioned region and chooses the next samples to take based on the current state if the confidence does not reach the threshold specified as problem input. We devise different algorithm variants and analyze their complexity. We experimentally compare them in terms of the number of optimizer invocations, samples, and iterations over many different query classes. 
\end{abstract}

\section{Introduction}
\label{introSec}

The goal in traditional query optimization is to find an optimal query plan. This optimization goal is based on the assumption that accurate values for the selectivity of each query predicate are available. This assumption is already optimistic in traditional scenarios where predicate selectivity is estimated based on a machine-readable definition of the predicate and on statistics describing the input data. It is completely unjustified if query predicates are formulated in natural language (e.g., in the context of crowd databases where predicates are evaluated by human workers~\cite{Franklin2011,  Marcus2012a, Park2013a}), if predicates are evaluated by invoking remote services (e.g., a weather service or a credit rating service~\cite{Garcia-molina2014}), or if predicates are complex functions defined using a different language than SQL (e.g., predicates whose evaluation requires sophisticated text or video processing~\cite{Garcia-molina2014}). In all those cases, the optimizer cannot access or interpret the definition of query predicates and has a-priori no information about their selectivity.

In such cases it is still possible to evaluate predicates on data samples during optimization to estimate their selectivity. This is in fact required to make any guarantees on the quality of the generated plans. The traditional optimization goal is however not reasonable in this context: when taking it seriously, we would need to collect an unbounded number of samples in order to obtain infinitely precise selectivity estimates based on which we can produce a guaranteed optimal plan (as different plans might be optimal for different selectivity values). How should we formulate the optimization goal if optimization and selectivity sampling are interleaved? Which predicates shall we sample? How many samples should we take? How should we choose a query plan based on the collected samples? In the following we will first propose a generalization of the traditional query optimization problem and justify why it constitutes a reasonable optimization goal in our scenario. The answers to all other questions will follow from that. 

Our adapted optimization goal is the following: find a query plan whose cost is not higher than optimal by more than factor $\alpha\geq 1$ with probability at least $\delta\in[0,1]$. Factor $\alpha$ and probability $\delta$ configure the optimizer and can be chosen by users or administrators, trading plan quality for optimization time and sampling overhead. We call this optimization problem probably approximately optimal query optimization (similar in spirit to the probably approximately correct learning framework~\cite{Valiant1984}). It is a reasonable problem definition, meaning that it does not require an infinite number of samples. Simplified problem definitions do not possess that property as we show next.

Selectivity is a continuous variable that takes values between 0 and 1. Without making further assumptions, taking a finite number of samples can only yield confidence bounds for the true selectivity values but never a point estimate. The cost of query plans depends in general on the selectivity and may vary for selectivity values within the confidence bounds. Finding a plan that is optimal for all selectivity values within the confidence bounds is therefore in general not possible. We assume in the following that plan cost functions are continuous in a small area around the true selectivity values (additional assumptions are discussed in Section~\ref{modelSec}). This assumption is less strong than the standard assumption in parametric query optimization, called the guiding principle of parametric query optimization, stating that the cost of query plans is approximately linear in small areas of the selectivity space~\cite{Dey2008}. Under this assumption, we can shrink the selectivity confidence bounds using a finite number of samples until one plan has near-optimal cost (within multiplicative factor $\alpha$ of the optimum) within the entire cube in the selectivity space that is defined by the confidence bounds. This is why our goal must be to find a near-optimal and not an optimal plan. 

In addition, we must consider the case that the true selectivity values fall outside the confidence bounds altogether. As long as we take a finite number of samples, this is always possible with non-zero probability. This is the reason that we aim at finding a near-optimal plan with probability at least $\delta$ and not to find a guaranteed near-optimal plan. 


Prior work on selectivity estimation in the context of query optimization has often focused on the case of correlated predicates as they can lead to sub-optimal plan choices~\cite{Chaudhuri2001, Stillger2001}. In contrast to that, we focus on novel application scenarios where even the handling of uncorrelated predicates is currently not solved in a principled manner. A recently proposed system for big data processing~\cite{Karanasos2014a} for instance estimates the selectivity of user-defined predicates by sampling. The amount of data to sample is however decided heuristically such that no probabilistic guarantees on the optimality of the generated query plans result. In the context of crowd database systems, where users may define natural language predicates that are evaluated by crowd workers, query optimizers currently rely on the user to specify the predicate selectivity by hand~\cite{Park2013a}. Otherwise default values are used~\cite{Park2013}. Both scenarios can benefit from the framework and approach that we introduce in the following.

Prior work (e.g., \cite{Chaudhuri2001}) usually assumes that predicate selectivity is estimated based on data statistics which are generated in one atomic operation. This motivates a binary uncertainty model: either the selectivity of a predicate is known or unknown. With such a problem model, the goal of finding near-optimal plans is actually reasonable. We target scenarios where it is often not possible to estimate selectivity based on data statistics (e.g., how to automatically estimate selectivity for a predicate formulated in natural language for evaluation by crowd workers?~\cite{Park2013}). We can only sample predicates and never know the precise selectivity. Hence we cannot adopt a binary uncertainty model. Another distinguishing factor between our approach and prior work is that our approach requires a minimal amount of information. We do for instance not assume that table cardinalities are known which makes it possible to support even scenarios related to crowdsourcing where tables are extracted from crowd workers and their cardinality is not known. Those seemingly subtle differences result in algorithmic challenges that are specific to our scenario. 

We present a first algorithm for probably approximately optimal query optimization (PAO). This algorithm is generic and works with arbitrary plan spaces and cost models as long as cost grows monotonically in the selectivity. We assume that nothing is known about the selectivity of predicates before sampling starts and we do not make any assumptions at all about the type of probability distribution that the selectivity values follow. Furthermore, our approach is non-intrusive and can easily be implemented on top of existing optimizers. Possible application scenarios for the algorithm presented in this paper include cost-based optimizers in crowd database systems (which currently assume that users specify selectivity values together with their queries~\cite{Park2013, Park2013a}) or optimization of queries including user-defined predicates (sampling is used in that context but without formal guarantees on plan quality~\cite{Karanasos2014a}).

The optimizer consists of three components in our scenario: a standard optimizer that finds optimal plans for given selectivity values, a sampling component that evaluates predicates on data samples, and a meta-algorithm that coordinates invocations of the sampling component and the standard optimizer in order to meet the probabilistic guarantees specified by PAO. We propose a meta-algorithm for PAO in this paper while we assume that a standard optimizer and a sampling component are provided.

We use a standard optimizer that requires selectivity estimates for invocation. Those estimates are derived from samples. On the other hand, we can exploit information about the optimal cost function, namely its sensitivity to the selectivity of specific predicates, to choose the predicates to sample and to optimize the number of samples to take. This cyclic dependency motivates an iterative algorithm: our algorithm interleaves sampling and optimization and executes several stages of sampling and optimization until the PAO guarantees are met. 

After having collected a small initial set of samples, our algorithm executes the following four steps in each iteration: First, it invokes the standard optimizer to obtain the best query plan for the current selectivity sample means. Second, it issues more calls to the standard optimizer in order to determine a cube-shaped region in the selectivity space for which the plan generated in the first step satisfies the near-optimality guarantees specified as problem input. Third, it uses a probabilistic model based on the Hoeffding inequality~\cite{Hoeffding63} to lower-bound the confidence that the true selectivity values fall within the cube-shaped near-optimality region calculated in the second step. If this confidence value exceeds $\delta$ then optimization ends and the current query plan is returned. If the confidence is not yet sufficiently high, our optimizer finally chooses the number of samples to take in the next iteration and the best distribution of samples over different query predicates.

Finding a cube-shaped near-optimality region (step two) is challenging: on the one hand, we want to find a region of maximal extent as this maximizes the confidence and allows to terminate optimization early. On the other hand, finding the region of maximal extent might require a prohibitive number of calls to the standard optimizer as we need to find the optimal cost value for many points in the selectivity space. We will present and experimentally compare different variants of the region finding algorithm. All of them only execute a bounded number of standard optimizer invocations per iteration (in our experiments we use an optimizer budget of ten calls per iteration for instance). Our most sophisticated algorithm variant iteratively extends the near-optimality cube and calculates the gradient for the cost function at the cube bounds after each extension: this allows to estimate the extent of the near-optimality cube before extension and therefore to carefully select the selectivity vectors for which the standard optimizer is invoked.

We also propose different policies for selecting the next samples to take. We present variants that offer formal guarantees on achieving near-optimal tradeoffs between the number of optimzier calls and the number of samples taken and we present policies that significantly reduce the number of required iterations by estimating the required number of samples based on an analysis of the plan cost function and the selectivity sample distribution obtained so far. We evaluate those more sophisticated policies in comparison with uniform sampling.

In summary, our original scientific contributions in this paper are the following:

\begin{itemize}
\item We introduce and formalize the probably approximately optimal query optimization problem (PAO).
\item We present several variations of a generic and non-intrusive algorithm for PAO that is applicable in many scenarios.
\item We formally analyze different algorithm variants and we experimentally compare them for different query sizes, join graph shapes, and number of predicates.
\end{itemize}

The remainder of this paper is organized as follows. We discuss related work in Section~\ref{relatedSec}. In Section~\ref{modelSec}, we introduce our formal problem model and discuss the fundamental assumptions it is based on. We present several variants of a PAO algorithm in Section~\ref{algorithmSec} and formally analyze it in Section~\ref{analysisSec}. We finally  compare different algorithm variants in an experimental evaluation in Section~\ref{experimentalSec}.

\section{Related Work}
\label{relatedSec}

We propose an iterative algorithm that interleaves query optimization and predicate sampling. This connects our approach to the magic number sensitivity analysis (MNSA) algorithm by Chaudhuri and Narasayya~\cite{Chaudhuri2001} but there are several important differences. First, MNSA assumes that information is collected by collecting data statistics from which predicate selectivity is inferred while we assume that predicates are sampled directly. Our probabilistic model presented in Section~\ref{confidenceSub} is specific to our scenario. Second, MNSA adopts a binary approach to uncertainty: either the selectivity of a predicate is completely unknown (within the range $\varepsilon$ and $1-\varepsilon$) or it is completely known. When sampling predicates, it can take prohibitively many samples to determine the true selectivity with negligible error. We therefore work with confidence bounds on  predicate selectivities and our optimization goal (finding plans that are near-optimal with high probability) is different than the one of Chaudhuri and Narasayya (finding plans that are guaranteed near-optimal). Many of the problems that our algorithm presented in Section~\ref{algorithmSec} solves (e.g., finding regions within the selectivity space where a given plan is near-optimal with few optimizer calls) are specific to our scenario. Finally, MNSA assumes that one data statistic is created in each iteration. We assume that many samples can be taken per iteration (this reduces the required number of optimizer calls and allows to exploit parallelism during sampling) and deciding how many samples to take per iteration and how to partition them over different predicates is another important sub-problem that we consider.

MNSA was later generalized~\cite{Bruno2002} but the proposed approach assumes that MNSA or a similar technique is applied in a first step to generate statistics on base tables and that hard bounds on predicate selectivity values (not probabilistic confidence bounds as in our case) can be inferred from those statistics. We cannot infer selectivity from data statistics in our case (e.g., no generic data statistic would allow to automatically estimate the selectivity of natural language predicates evaluated by crowd workers on a picture database~\cite{Marcus2011a}) and MNSA is not suitable for our scenario as discussed before.

Joklekar et al.~\cite{Garcia-molina2014} propose a sampling approach to estimate the selectivity of user-defined predicates (mainly for the case of one predicate while an extension is shortly discussed). Their focus is on how to avoid evaluation of user-defined predicates by approximation while our focus is on interleaved join ordering and sampling, assuming that each predicate is evaluated on each relevant tuple. Joklekar et al.\ and others~\cite{Chaudhuri2009a} obtain confidence bounds on predicate selectivities but it is a-priori unclear how selectivity bounds translate into plan cost bounds and this is why we interleave optimization and sampling in an iterative approach. 

Karanasos et al.~\cite{Karanasos2014a} sample the selectivity of user-defined predicates by executing pilot runs; an initial query plan is generated based on those pilot runs, the plan might be re-optimized once new information becomes available during execution. Unlike our approach, the work by Karanasos et al.\ does not provide any probabilistic guarantees on the near-optimality of the generated plans. Another difference is that we carefully choose the number of samples to take for each predicate in each iteration based on how sensitive the optimal join order is to the selectivity of different predicates. Re-optimization~\cite{Avnur2000, Babu2005, Graefe1989, Karanasos2014a} at run time is a generic technique to cope with uncertain predicate selectivities. However, an initial query plan with which to start execution must still be chosen and a sub-optimal initial plan can lead to huge overheads. Our approach can be used to generate that initial plan.

Seppi et al.~\cite{Seppi1993} propose a Bayesian approach to query optimization: based on prior probability distributions on the selectivity of predicates, the optimizer decides how much sampling is appropriate to choose between alternative operator implementations. Their approach relates to ours since the amount of samples to take is decided based on statistical models. The algorithm by Seppi et al.\ is however non-iterative and whether an appropriate number of samples is chosen depends entirely on the accuracy of the prior knowledge on predicate selectivities. In our scenario, nothing is known about the predicates a-priori. Further, the approach by Seppi et al.\ does not consider join order but only operator implementations. Finally, the probabilistic model by Seppi et al.\ directly operates on probability distributions over cost values; standard cost formulas need to be simplified in order to make the approach applicable~\cite{Seppi1993} and the used formulas must be tailored to specific scenarios. Our approach operates on probability distributions over selectivity values but not on probability distributions over cost values. Our approach is therefore applicable for arbitrary cost models (as long as cost values grow monotonically in the selectivity) and does not require adaption to specific scenarios. 

Optimization techniques that avoid selectivity estimation altogether~\cite{Dutt2014, Dutt2014a} incur huge overheads during planning and execution (optimization time is in the order of hours and execution time exceeds the optimal execution time in practice by factor 10 for some queries). The work by Goel et al.~\cite{Goel2006, Guha2007} is related to ours as it addresses the question of how to efficiently invest resources to reduce uncertainty at optimization time. However, the problem model used by Goel et al.\ does not match our scenario. It assumes that probability distributions are initially known for all parameters and that a subset of parameters is selected for which exact values can be obtained by probing. In our case, no probability distributions on the selectivity of predicates are initially known and exact selectivity values cannot be obtained except by evaluating each predicate on each tuple which is equivalent to query execution in our scenario. The algorithms proposed by Goel et al.\ only treat the case of simple cost functions that are defined as maximum or minimum over all parameters. Query optimization is cited as application scenario by Goel et al.\ but with the focus of selecting an optimal subset of queries to execute; the presented approaches cannot find optimal join orders.

Parametric query optimization~\cite{Cole1994, Ganguly1998, Hulgeri2002} assumes that predicate selectivities are unknown at optimization time. The goal is usually to generate a set of plans containing an optimal or near-optimal plan for each possible combination of selectivity values. In contrast, our goal is to find one plan that is near-optimal with a certain probability. Further, unlike in parametric query optimization, we assume that the optimizer can reduce uncertainty at optimization time by sampling predicates. This assumption motivates the design of our iterative algorithm. The goal in robust query optimization~\cite{Abhirama2010a, Babcock2005, D.2008} is to produce plans that are robust towards selectivity estimation errors, using a given probability distribution over selectivity values to judge robustness. It does not consider the possibility of taking samples during optimization which is central to our scenario. 
\section{Model and Assumptions}
\label{modelSec}

Unlike a typical publication on query optimization, we will neither introduce a specific query model nor a specific cost model nor a specific plan model in this section. The reason is that, strictly speaking, we do not propose an algorithm for query optimization but rather a meta-algorithm that uses an existing optimizer and sampling infrastructure to generate plans that satisfy certain probabilistic quality guarantees. 

A query describes data that needs to be generated. To make the following definitions more concrete, we can for instance imagine that queries are described in some extension of SQL that allows user-defined functions as predicates or predicates that are formulated in natural language and evaluated by crowd workers~\cite{Marcus2012a}. We make however only very generic assumptions on the query language and our algorithm is in principle applicable to other structured query languages as well. 

Queries are associated with predicates that need to be evaluated on data subsets. In case of SQL, predicates are defined on single base tables or between table subsets. Predicates are associated with a selectivity value between 0 and 1 which is the fraction of input data items that will satisfy the predicate. Estimating the selectivity of predicates is crucial as the choice of the optimal processing plan depends on them. Queries might contain predicates whose selectivity is known or can be reasoned about. We assume however that queries contain additionally predicates whose selectivity is unknown and can only be estimated by evaluating predicates on randomly selected data samples which we call short to \textit{sample the predicate}. For a given query $q$, we refer to the set of predicates with unknown selectivity values by $pred(q)$.

Many alternative query plans are in general available to process a given query. The execution cost of a query plan depends on the predicate selectivities. We use in the following the term \textit{selectivity vector} to refer to a vector containing a selectivity value for each predicate of a given query. Given two selectivity vectors $s_1$ and $s_2$, we write $s_1\preceq s_2$ to indicate that the selectivity value assigned by $s_1$ is smaller or equal to the selectivity assigned by $s_2$ for each predicate.

As selectivity values are unknown in our scenario, we must think of plan cost as a function that maps selectivity vectors to cost values. We assume that a cost model is available that maps plans and selectivity vectors to cost estimates. We represent this cost model by the function call \Call{Cost}{$p,s$} for a plan $p$ and a selectivity vector $s$ in our pseudo-code. If \Call{Cost}{$p,s$}$=c$ then we say that plan $p$ has cost $c$ at or for selectivity vector $s$.

The algorithm that we present in this paper is based on a generic assumption on the cost model: if $s_1\preceq s_2$ then the cost of any fixed plan is not higher for $s_1$ than for $s_2$, i.e.\ cost functions are monotone in the selectivity. This assumption is common in query optimization~\cite{Bizarro2009, Chaudhuri2001}. It is intuitive as higher predicate selectivity may lead to larger intermediate results and processing larger results is usually more costly. In the context of SQL, it holds for single-block SQL queries while it may not always hold for multi-block queries~\cite{Chaudhuri2001}.

For a fixed selectivity vector, we say that a plan is optimal if it has minimal cost among all possible query plans. We assume that a standard query optimizer is available that determines for a given selectivity value an optimal plan. We represent that optimizer by the function call \Call{BestPlan}{$q,s$} for query $q$ and selectivity vector $s$ in our pseudo-code. We say that a plan is $\alpha$-optimal (we also say \textit{near-optimal} as long as it is clear from the context to which value of $\alpha$ we refer) for a specific selectivity vector if its cost at $s$ is not higher than the optimal cost at $s$ by more than factor $\alpha$. 

We call the set of all possible selectivity vectors for a given query the selectivity space and a subset of vectors a region. A plan is $\alpha$-optimal in a selectivity space region if it is $\alpha$-optimal for each contained vector. We are particularly interested in (hyper)-cube shaped regions of the selectivity space that are defined by two selectivity vectors, a lower bound vector $lb$ and an upper bound vector $ub$ such that $lb\preceq ub$. The corresponding cube contains all vectors $s$ such that $lb\preceq s$ and $s\preceq ub$. In two dimensions, a cube corresponds for instance to a rectangle with the upper bound as right-upper corner and the lower bound as lower-left corner. We will associate plans with selectivity cubes in which they are near-optimal and that we call \textit{near-optimality cubes} in that context.

Our goal in this paper is to solve the probably approximately optimal query optimization problem: a problem instance is defined by a query $q$, an approximation factor $\alpha$, and a confidence threshold $\delta$. The goal is to find a query plan for $q$ that is $\alpha$-optimal with confidence $\delta$.
\section{Algorithm}
\label{algorithmSec}

We present a first algorithm for PAO in this section. We discuss the main function in Section~\ref{mainAlgSub}. In Section~\ref{cubeSub}, we present different algorithm variants for the sub-function that identifies a cube-shaped region in the selectivity space for which a given plan has near-optimal cost. In Section~\ref{confidenceSub}, we discuss how to calculate the confidence that the true selectivity values fall within a given region in the selectivity space. In Section~\ref{samplingSub}, we present different policies for choosing the number of samples to take for each predicate in the next iteration.

\subsection{Main Algorithm}
\label{mainAlgSub}

\begin{algorithm}[t!]
\renewcommand{\algorithmiccomment}[1]{// #1}
\begin{footnotesize}
\begin{algorithmic}[1]
\State \Comment{Returns query plan for query $q$ that is $\alpha$-optimal}
\State \Comment{with probability at least $\delta$.}
\Function{PAOQ}{$q,\alpha,\delta$}
\State \Comment{Initialize selectivity sample set $S$}
\State $S\gets$\Call{InitSamples}{$q$}
\State \Comment{Until sufficiently confident to have $\alpha$-optimal plan}
\Repeat\label{mainLoopStart}
\State \Comment{Optimize for current selectivity means}
\State $p\gets$\Call{BestPlan}{$q,S.means$}
\State \Comment{Find selectivity cube where $p$ is $\alpha$-optimal}
\State $noc\gets$\Call{NearOptCube}{$q,p,\alpha,S$}
\State \Comment{Calculate confidence that selectivity in cube}
\State $conf\gets$\Call{Confidence}{$noc,S$}\label{mainCalcConfLine}
\State \Comment{Determine which samples to take next}
\State $nrSamples\gets$\Call{ChooseSamples}{$q,p,\alpha,\delta,S,noc$}
\State \Comment{Take corresponding number of samples}
\State $S\gets S\cup$\Call{TakeSamples}{$nrSamples$}
\Until{$conf\geq\delta$}\label{mainLoopEnd}
\State \Return{$p$}
\EndFunction
\end{algorithmic}
\end{footnotesize}
\caption{Interleaved sampling and optimization for probably approximately optimal query optimization.}
\label{paoAlg}
\end{algorithm}

Algorithm~\ref{paoAlg} is the main function of our algorithm. The input is a PAO problem specification, consisting of a query $q$ to optimize, an approximation factor $\alpha$ defining what constitutes a near-optimal cost value, and a confidence threshold $\delta$. The output is a query plan for query $q$ whose execution cost is not higher than optimal by more than factor $\alpha$ with probability at least $\delta$. We motivated this problem statement in Section~\ref{introSec}.

Our algorithm obtains very rudimentary initial selectivity estimates by evaluating each predicate on a small data sample (function~\textproc{InitSamples} in the pseudo-code). We can either use a small fixed number of evaluations per predicate. Or we can use an initialization method similar to the one used by Karanasos et al.~\cite{Karanasos2014a} and evaluate each predicate on its input tuples until we find the first tuple for which the predicate is satisfied. The initial sample will in general be insufficient to generate a plan that is near-optimal with sufficiently high confidence. The bigger part of samples is taken during the iterations of the main loop (lines~\ref{mainLoopStart} to \ref{mainLoopEnd}) that we explain next.

We execute the following steps in each iteration of the main loop. First, a query plan is created that is optimal when assuming that the selectivity value for each predicate is the arithmetic mean of all samples collected for that predicate so far. Function~\textproc{BestPlan} represents the invocation of a standard query optimizer that generates the best query plan for given selectivity estimates. Having an optimal plan for the selectivity means, we calculate a cube-shaped region in the selectivity space for which that plan has near-optimal cost (within factor $\alpha$ of the optimum). We use function~\textproc{NearOptCube} to calculate the near-optimality region. We discuss alternative implementations of that function in the next subsection. Having a selectivity region for which the current plan is near-optimal, we calculate the confidence that the true selectivity values fall into that region using function~\textproc{Confidence}. We discuss the formulas that we use to calculate confidence in Section~\ref{confidenceSub}. 

We finally choose the number of samples to take for each predicate using function~\textproc{ChooseSamples}. The output of \textproc{ChooseSamples} is a vector specifying for each query predicate the number of samples to take next. We discuss different policies for choosing the number of samples in Section~\ref{samplingSub}. The choice of the number of samples might be based for instance on the sensitivity of the cost function of the current plan towards the selectivity in different predicates. Having chosen the number of samples to take, we draw those samples using a sampling component represented by function~\textproc{TakeSamples}. Iterations end once the required confidence value is reached. 

Note that we can of course check whether the current confidence value is sufficient immediately after calculating confidence (line~\ref{mainCalcConfLine}) and avoid taking more samples in that case. The pseudo-code shown in Algorithm~\ref{paoAlg} is simplified while the implementation used for our experiments performs that check. Also note that we can introduce additional termination conditions, restricting for instance the number of iterations, standard optimizer calls, or the number of samples taken. In particular, we can introduce an additional termination condition that bounds the total number of samples based on the estimated execution cost of the current plan, thereby avoiding disproportional sampling overhead.

\subsection{Finding Near-Optimality Cubes}
\label{cubeSub}

We describe different possibilities to implement function \textproc{NearOptCube} in Algorithm~\ref{paoAlg}. Our goal is to find a selectivity cube in which a given plan has near-optimal cost. If possible, we want to find the cube with maximal extent among all cubes satisfying the aforementioned constraints. The reason is that we terminate the iterations of the main loop once our confidence that the true selectivity values lie within that cube reaches threshold $\delta$. Obtaining a bigger cube allows earlier termination. Verifying whether a plan has near-optimal cost in a given cube requires however calls to the standard optimizer. The number of optimizer calls required for finding a cube of maximal extent might be prohibitive. Hence we devise algorithms that try to find the largest possible near-optimality cube with a given budget of optimizer calls. Note also that the true region in which a given plan is optimal is in general not a cube. We still choose to approximate it by a cube since cubes are easy to represent and since the statistical model presented in Section~\ref{confidenceSub} is specific to cube-shaped regions.

\begin{algorithm}[t!]
\renewcommand{\algorithmiccomment}[1]{// #1}
\begin{algorithmic}[1]
\State \Comment{Extend upper and lower bounds of selectivity }
\State \Comment{space cube $noc$ by a fixed ratio.}
\Function{Extension}{$noc$}
\State $noc.lb\gets noc.lb\cdot (1-\epsilon)$
\State $noc.ub\gets noc.ub\cdot (1+\epsilon)$
\State \Comment{Make sure that selectivities are at most 1}
\State $noc.ub\gets\min(noc.ub,$\Call{Ones}{}$)$
\State \Return{$noc$}
\EndFunction
\vspace{0.1cm}
\State \Comment{Checks if plan $p$ for query $q$ is $\alpha$-optimal within}
\State \Comment{selectivity space cube $noc$.}
\Function{IsNearOptCube}{$q,p,\alpha,noc$}
\State $ubCost\gets$\Call{Cost}{$p,noc.ub$}
\State $lbPlan\gets$\Call{BestPlan}{$q,noc.lb$}
\State $lbCost\gets$\Call{Cost}{$lbPlan,noc.lb$}
\State \Return{$ubCost\leq \alpha\cdot lbCost$}
\EndFunction
\vspace{0.1cm}
\State \Comment{Returns selectivity space cube containing means of}
\State \Comment{samples $S$ in which plan $p$ is $\alpha$-optimal for query $q$.}
\Function{NearOptCube}{$q,p,\alpha,S$}
\State \Comment{Assign current sample means as initial bounds}
\State $ex.lb\gets S.means$
\State $ex.ub\gets S.means$
\State $nrOpt\gets 0$
\State \Comment{While $ex$ is $\alpha$-optimal and optimizer budget left}
\Repeat
\State \Comment{Current cube becomes last extension}
\State $noc\gets ex$
\State \Comment{Extend near-optimality cube}
\State $ex\gets$\Call{Extension}{$noc$}
\State \Comment{Count optimizer call for near-optimality check}
\State $nrOpt\gets nrOpt+1$
\Until{$\neg$\Call{IsNearOpt}{$q,p,\alpha,ex$}$\vee nrOpt\geq b$}
\State \Return{noc}
\EndFunction
\end{algorithmic}
\caption{Naive fixed-ratio algorithm for finding a near-optimality cube in the selectivity space.\label{cubeNaiveAlg}}
\end{algorithm}

We present two alternative algorithms for implementing function~\textproc{NearOptCube} in this subsection (Algorithm~\ref{cubeNaiveAlg} and \ref{cubeComplexAlg}). Algorithm~\ref{cubeNaiveAlg} is a rather naive approach at finding a large near-optimality cube with a given number of optimizer calls. The input is a plan $p$ for query $q$ together with the approximation factor $\alpha$ and the set of selectivity samples $S$ collected so far. The output is a near-optimality cube, $noc$, in which $p$ is $\alpha$-optimal.

We start with a cube of volume zero whose upper and lower bounds are the selectivity mean vectors from the current samples. This cube is iteratively extended until $p$ is not $\alpha$-optimal within the extended cube anymore or the number of optimizer calls exceeds the budget $b$. Cube extensions are realized by function~\textproc{Extension} by multiplying the upper cube bound vector by factor $(1+\epsilon)$ and the lower bound by factor $(1-\epsilon)$. The choice of $\epsilon$ is crucial and having a parameter to tune is a weakness of this first algorithm. In principle, we could also extend cubes by increasing or decreasing the bounds by an additive constant. The problem is however that adding a constant value to small selectivity values has large impact on the plan cost function while adding the same constant to a large selectivity values has little impact; it is therefore preferable to extend the bounds by a constant ratio. 

After each cube extension, function~\textproc{IsNearOptCube} verifies whether plan $p$ is still $\alpha$-optimal within the extended cube. This check requires one optimizer call and compares the optimal cost at the lower cube bound, $lbCost$, with the cost of the current plan $p$ at the upper cube bound, $ubCost$. The near-optimality check is based on the following theorem, similar results were exploited by Chaudhuri and Narasayya~\cite{Chaudhuri2001} and by Bizarro et al.~\cite{Bizarro2009}.

\begin{theorem}
If $ubCost\leq\alpha\cdot lbCost$ then plan $p$ is $\alpha$-optimal between the selectivity cube bounds.\label{nearOptTheorem}
\end{theorem}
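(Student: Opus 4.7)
The plan is to argue directly from the monotonicity assumption on cost functions stated earlier in Section~\ref{modelSec}. The key observation is that monotonicity applies not only to any fixed plan's cost function but also to the pointwise optimum over all plans, and the claim then reduces to a short chain of inequalities.

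First, I would fix an arbitrary selectivity vector $s$ in the cube, i.e.\ with $noc.lb \preceq s \preceq noc.ub$, and let $p^{*}_s = $\Call{BestPlan}{$q,s$} denote an optimal plan at $s$ with optimal cost $\mathrm{Opt}(s) = $\Call{Cost}{$p^{*}_s,s$}. Goal: show \Call{Cost}{$p,s$}$\leq \alpha\cdot \mathrm{Opt}(s)$.

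Next, I would establish two monotonicity facts. (i) By the cost-monotonicity assumption applied to plan $p$, since $s \preceq noc.ub$, we have \Call{Cost}{$p,s$}$\leq$\Call{Cost}{$p,noc.ub$}$= ubCost$. (ii) The pointwise optimum $\mathrm{Opt}(\cdot)$ is itself monotone in the selectivity: for $s_1\preceq s_2$, cost-monotonicity gives \Call{Cost}{$p^{*}_{s_2},s_1$}$\leq$\Call{Cost}{$p^{*}_{s_2},s_2$}$=\mathrm{Opt}(s_2)$, and optimality at $s_1$ gives $\mathrm{Opt}(s_1)\leq$\Call{Cost}{$p^{*}_{s_2},s_1$}; chaining yields $\mathrm{Opt}(s_1)\leq \mathrm{Opt}(s_2)$. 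Applying this with $s_1=noc.lb$ and $s_2=s$, together with $lbCost = $\Call{Cost}{$lbPlan,noc.lb$}$=\mathrm{Opt}(noc.lb)$ (by definition of $lbPlan$ as the optimizer's output), gives $lbCost \leq \mathrm{Opt}(s)$.

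Finally, I would combine these with the hypothesis $ubCost\leq \alpha\cdot lbCost$:
\begin{equation*}
\Call{Cost}{p,s} \;\leq\; ubCost \;\leq\; \alpha\cdot lbCost \;\leq\; \alpha\cdot \mathrm{Opt}(s),
\end{equation*}
which is exactly $\alpha$-optimality of $p$ at $s$. Since $s$ was an arbitrary point of the cube, $p$ is $\alpha$-optimal everywhere between the cube bounds. There is no real obstacle here; the only step that deserves an explicit remark is the derived monotonicity of $\mathrm{Opt}(\cdot)$, which is what lets us replace the unknown $\mathrm{Opt}(s)$ by the concretely computed $lbCost$ at the corner of the cube.
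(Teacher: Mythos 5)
Your proof is correct and follows essentially the same route as the paper: monotonicity of the fixed plan $p$ bounds its cost in the cube by $ubCost$, and monotonicity of the optimal cost function bounds the optimum from below by $lbCost$, after which the hypothesis closes the chain. The only difference is that you explicitly verify that the pointwise optimum $\mathrm{Opt}(\cdot)$ inherits monotonicity from the per-plan cost functions, a step the paper's proof states without derivation; this is a worthwhile clarification but not a different argument.
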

\begin{proof}
The optimal cost $lbCost$ at the lower cube bound is at the same time a lower bound for the optimal cost in the entire cube as cost functions are assumed to be monotone in the selectivity. The cost of plan $p$ at the upper cube bound is an upper bound on the cost of plan $p$ in the entire cube for the same reason. Hence if $ubCost\leq\alpha\cdot lbCost$ then the cost of plan $p$ is not higher than the optimum by more than factor $\alpha$ in the entire cube. 
\end{proof}

\begin{algorithm}[t!]
\renewcommand{\algorithmiccomment}[1]{// #1}
\begin{algorithmic}[1]
\State \Comment{Calculates recommended step size when trying to}
\State \Comment{reach target cost $tcost$ for plan $p$ by making}
\State \Comment{a step from selectivity vector $sel$.}
\Function{StepSize}{$p,sel,tcost$}
\State \Comment{Calculate plan cost at selectivity vector}
\State $cost\gets$\Call{Cost}{$p,sel$}
\State \Comment{Calculate margin to target cost}
\State $margin\gets tcost - cost$
\State \Comment{Calculate cost slope at selectivity vector}
\State $slope\gets $\Call{Ones}{}$\cdot\nabla $\Call{Cost}{\textbf{fix} $p,X$}$|_{X=sel}$
\State \Comment{Return recommended step size}
\State \Return{$margin/slope$}
\EndFunction
\vspace{0.1cm}
\State \Comment{Extends cube $noc$ by adding $ubStep$ and $lbStep$}
\State \Comment{to upper and lower cube bounds respectively.}
\Function{Extension}{$noc,lbStep,ubStep$}
\State $noc.lb\gets noc.lb+lbStep\cdot$\Call{Ones}{}
\State $noc.ub\gets noc.ub+ubStep\cdot $\Call{Ones}{}
\State \Comment{Verify that selectivity within admissible range}
\State $noc.lb\gets\max(noc.lb,$\Call{Zeros}{}$)$
\State $noc.ub\gets\min(noc.ub,$\Call{Ones}{}$)$
\State \Return{$noc$}
\EndFunction
\vspace{0.1cm}
\State \Comment{Returns selectivity space cube containing means of}
\State \Comment{samples $S$ in which plan $p$ is $\alpha$-optimal for query $q$.}
\Function{NearOptCube}{$q,p,\alpha,S$}
\State \Comment{Assign current sample means as initial bounds}
\State $noc.lb\gets S.means$
\State $noc.ub\gets S.means$
\State \Comment{Determine target cost values for cube bounds}
\State $c\gets$\Call{Cost}{$p,S.means$}
\State $lbtc\gets c/\sqrt{\alpha}$
\State $ubtc\gets c\cdot\sqrt{\alpha}$
\State \Comment{Until optimizer budget runs out}
\State $nrOpt\gets 0$
\Repeat
\State \Comment{Calculate best plan for lower cube bounds}
\State $lbPlan\gets$\Call{BestPlan}{$q,noc.lb$}
\State $nrOpt\gets nrOpt+1$
\State \Comment{Calculate recommended step size at bounds}
\State $lbStep\gets$\Call{StepSize}{$lbPlan,noc.lb,lbtc$}
\State $ubStep\gets$\Call{StepSize}{$p,noc.ub,ubtc$}
\State \Comment{Reduce step size until extension possible}
\State $extended\gets$\textbf{false}
\Repeat
\State $ex\gets$\Call{Extension}{$noc,lbStep,ubStep$}
\State \Comment{Check if extended cube is near-optimal}
\If{\Call{IsNearOptCube}{$q,p,\alpha,ex$}}
\State $noc\gets ex$
\State $extended\gets$\textbf{true}
\Else
\State $lbStep\gets lbStep/2$
\State $ubStep\gets ubStep/2$
\EndIf
\State $nrOpt\gets nrOpt+1$
\Until{$extended\vee nrOpt\geq b$}
\Until{$nrOpt\geq b$}
\State \Return{noc}
\EndFunction
\end{algorithmic}
\caption{Finding near-optimality cubes with slope-based choice of extension step size.\label{cubeComplexAlg}}
\end{algorithm}

We can improve Algorithm~\ref{cubeNaiveAlg} in several ways. First, instead of terminating once a cube extension does not yield a near-optimality cube anymore, we can backtrack and try to extend the last cube with a smaller step size. Second, our goal should be to extend the upper and the lower bound by a similar amount since we will calculate confidence (that the true selectivity values falls into the cube) based on the minimal distance between selectivity mean and the cube bounds. Having extended either the upper or the lower bound by a big margin is useless if the other bound remains close to the mean as the confidence remains low. Third, extending the selectivity of each predicate by the same ratio is too inflexible as it neglects that the sensitivity of the cost function towards changes of the selectivity values might differ across different plans. Algorithm~\ref{cubeComplexAlg} integrates all of the aforementioned improvements.

The selectivity cube is initialized in the same way as before. Before starting the extensions, we calculate however a target cost value for the upper (variable $ubtc$) and the lower cube bound (variable $lbtc$). The semantics of those cost values is that our algorithm tries to extend the cube in a way such that the cost of plan $p$ at the upper cube bound corresponds approximately to $ubtc$ and the optimal cost value at the lower cube bound corresponds approximately to $lbtc$. This choice connects to the second improvement: if plan cost functions were linear then making the cost of $p$ at the upper bound equal to $ubts$ and the optimal cost at the lower cube bound equal to $lbtc$ would result in the near-optimality cube maximizing the minimal distance between selectivity mean and cube bounds. Note that we do not require plan cost functions to be linear and our algorithm works for non-linear cost functions as well. It will however tend to be more efficient the more closely the cost function resembles a linear function around the current selectivity means.

Each iteration of the outer loop in function~\textproc{NearOptCube} performs two steps. First, a recommended step size is calculated separately for the lower and the upper cube bound. Second, starting with the recommended step sizes we try to extend the current cube, cutting the step sizes by two until either the extended cube is small enough such that $p$ is $\alpha$-optimal within it or the optimizer budget runs out.

Function~\textproc{StepSize} calculates the recommended step size to start the extensions with. We choose the step size based on the sensitivity of the cost function which realizes the third of the three improvements we discussed before. We have calculated target cost values that we want to reach at the upper and lower cube bounds. The step size is based on the margin, i.e.\ on by how much the current cost value differs from the target. We also need to take into account how quickly the cost function changes in the selectivity parameters around the bounds. The expression \Call{Ones}{}$\cdot\nabla $\Call{Cost}{\textbf{fix} $p,X$}$|_{X=sel}$ calculates the slope of the cost function of plan $p$ in the direction of the unit vector (function~\textproc{Ones}) and the recommended step size takes the slope and the margin into account. In order to choose the step size at the lower cube bounds we examine the cost function of the plan that is optimal at the lower cube bounds (as we are interested in the slope of the optimal cost function). At the upper bound we examine the cost function of plan $p$ that is optimal for the current selectivity means. 

The calculated step size is only used as a recommendation and if extension is not possible with that step size then it is reduced. This is the purpose of the inner loop in function~\textproc{NearOptCube} and implements the first of the improvements proposed before. We compare both algorithms for finding near-optimality cubes in our experimental evaluation in Section~\ref{experimentalSec}.

\subsection{Calculating Confidence}
\label{confidenceSub}

Given a cube-shaped selectivity region in which a plan $p$ is $\alpha$-optimal, function~\textproc{Confidence} calculates the confidence that the true selectivity values fall within that region. This is at the same time our confidence that plan $p$ has near-optimal cost. We do not provide pseudo-code for function~\textproc{Confidence} but we describe the mathematical formulas that it applies.

We denote by $t$ the vector of true selectivity values in the following. We are interested in the probability that the true selectivity values fall within cube $noc$, i.e.\ we are interested in $\Pr(t\in noc)=\Pr(noc.lb\preceq t\preceq noc.ub)$. We want to lower-bound that probability and the resulting bound is our confidence value.

We denote by the subscript notation the selectivity value for one specific predicate (e.g., $t_r$ denotes the true selectivity of predicate $r$). The probability that the true selectivity is within the cube equals the probability that the selectivity of each predicate is within its corresponding cube bounds, i.e.\ $\Pr(t\in noc)=\prod_{r\in pred(q)}\Pr(noc.lb_r\leq t_r\leq noc.ub_r)$, denoting by $pred(q)$ the set of predicates in query $q$ that need to be sampled. Note that we assume independence between the deviations of the current mean from the true selectivity between different predicates.

The true selectivity for a predicate is at the same time the expected value for the mean of the corresponding selectivity samples. We can therefore upper-bound the probability that the true selectivity of one specific predicate falls outside the cube bounds by the probability that an expected value deviates from the current mean. The Hoeffding inequality~\cite{Hoeffding63} provides us with corresponding bounds while making no assumptions on the probability distribution of the sampling distribution. 

More precisely, denote by $a_r$ the arithmetic average value of the selectivity samples collected for one specific predicate $r$. Denote by $t_r$ its true selectivity  and denote by $l_r$ the number of samples (i.e., the number of tuples on which the predicate was evaluated) collected for that predicate so far. The cube bounds for $r$ are given by $noc.lb_r$ and $noc.ub_r$ and due to how we generate cubes, the mean value must be contained between the cube bounds: $noc.lb_r\leq a_r\leq noc.ub_r$. Denote by $d_r=\min(a_r-noc.lb_r,noc.ub_r-a_r)$ the minimal distance between the current mean and the cube bounds. Then the probability that the true selectivity of $r$ falls outside the cube bounds, $\Pr(t_r<noc.lb_r\vee t_r>noc.ub_r)$, is bounded by the probability that the distance between current mean and true selectivity exceeds $d_r$. 

As the true selectivity is the expected value of the mean, we can bound that probability using the Hoeffding inequality. Assuming that $noc.lb_r>0$ and $noc.ub_r<1$, we have to apply the two-sided version of the Hoeffding inequality (otherwise we can use the one-sided version which provides a tighter bound). In that case, we obtain $2\cdot\exp(-2\cdot d_r^2\cdot l_r)$ as upper bound for $\Pr(t_r<noc.lb_r\vee t_r>noc.ub_r)$. This means our confidence for the true selectivity being within the current cube bounds is lower bounded by $1-2\cdot\exp(-2\cdot d_r^2\cdot l_r)$. 

The confidence that the true selectivity falls within cube $noc$ is therefore given by \[\prod_{r\in pred(q)}\max(1-2\cdot\exp(-2\cdot d_r^2\cdot l_r),0).\]

\subsection{Sampling Schemes}
\label{samplingSub}

Function~\textproc{ChooseSamples}, used in Algorithm~\ref{paoAlg}, determines for each predicate how many samples to take when the next batch of samples is collected. We discuss different sampling strategies in this subsection. We do not provide pseudo-code but an implementation based on our descriptions is straight-forward.

We can split the problem of choosing the number of samples per predicate in the current iteration into two sub-problems. First, we must decide how many samples to take in total. Second, we must decide how to distribute those samples over the different predicates.

There are multiple metrics according to which we can compare sampling strategies. We can compare them in terms of the number of main loop iterations that they lead to but we can also compare them in terms of how many samples are taken in total or how often the standard optimizer is invoked. The number of optimizer invocations and the number of iterations are certainly correlated as we execute a bounded number of optimizer invocations in each iteration. The number of samples and the number of iterations are however not necessarily correlated and the right answer to the first question (how many samples to take in total) depends on the tradeoff that we want to realize between them. If we take a relatively large number of samples in each iteration then the number of required iterations and optimizer invocations will be reduced while the total number of samples might be higher than optimal. If we take a small number of samples then the number of iterations required to reach the PAO guarantees tends to increase. Note that the time required for sampling might depend more on the number of iterations than on the total number of samples if we can exploit parallelism on the sampling infrastructure (e.g., if data is processed on a big cluster or on a crowdsourcing platform such as Amazon Mechanical Turk\footnote{\url{https://www.mturk.com}}).

The optimal answer to the second question (how to distribute the total number of samples over different query predicates) depends on the samples collected so far, on the plan that is optimal for the current selectivity means, and on the extent of its near-optimality cube. Assume for instance that the current plan is near-optimal for all possible selectivity values of one predicate. Then sampling this predicate is not interesting. Or assume that the cost function of the current plan is equally sensitive to two predicates but that we already have very tight confidence bounds for one of them while the selectivity of the second one is rather uncertain. Clearly it is more interesting to sample the second predicate in this situation. The sampling strategies that we discuss in the following all consider a subset of the issues we raised before and we compare different sampling strategies formally in Section~\ref{analysisSec} and experimentally in Section~\ref{experimentalSec}.

The simplest sampling strategy that we consider is uniform sampling. We take the same number of samples in each iteration and samples are distributed uniformly over the predicates.

A slightly more sophisticated sampling strategy is exponential sampling. The total number of samples is still distributed uniformly over the different predicates. However, the total number of samples to take is chosen according to the following principle. We take the same number of samples as uniform sampling in the first iteration. Then, denoting by $h$ the total number of samples taken in all prior iterations, we take $\rho\cdot h$ samples in the current iteration. Factor $\rho>0$ is a user-defined constant that allows to trade between the number of iterations and the number of samples taken. We will see during formal analysis and during our experiments that this seemingly simple sampling strategy often achieves a good tradeoff between the number of iterations, samples, and optimizer invocations.

All sampling strategies discussed so far distribute the total number of samples uniformly over all predicates. This is not the case for our last sampling strategy that we call adaptive sampling. Adaptive sampling adapts the number of samples to take for each predicate to the plan that is optimal for the current selectivity means and to its near-optimality cube. We make the conservative assumption that new samples will not change the current selectivity means significantly, that the plan that is optimal for the current selectivity means will remain optimal in the next iteration, and that its near-optimality cube remains the same as well. Under those assumptions, we can calculate the number of additional samples required for each predicate that are required in order to reach the required confidence value $\delta$. More precisely, denote by $m$ the number of query predicates. Then the confidence value $\delta$ is reached once the confidence for each single predicate that its true selectivity value falls within the cube bounds reaches value $\delta^{1/m}$. For each predicate, we must therefore set $1-2\cdot\exp(-2\cdot d_r^2\cdot l_r)\geq\delta^{1/m}$ which allows to calculate the required number of samples.
\section{Formal Analysis}
\label{analysisSec}

We analyze the maximal number of samples that the algorithm presented in the last section has to take in order to satisfy the probabilistic PAO guarantees. 

The number of required samples depends on properties of the cost function. We assume in the following that multiplying the selectivity estimate for each of $m$ predicates by factor $f$ might multiply the cost of a plan by up to factor $f^m$. We call this assumption the maximal sensitivity assumption (MSA) in the following, as cost functions are assumed to be equally sensitive to a change in any of the predicate selectivity values. We require MSA only to hold locally in a small area around the true selectivity values. Note that our algorithm works even if MSA is not satisfied; the assumption is uniquely required for our formal analysis.

MSA is a pessimistic assumption as having cost functions that are very sensitive to the selectivity estimates means that plans tend to be near-optimal within smaller regions of the selectivity space. Then the number of samples required to shrink the selectivity confidence bounds until they are contained within those near-optimality regions increases. 

We first establish a minimal extent for near-optimality regions under the MSA. Then we calculate the number of samples required to shrink the confidence bounds. We denote in the following by $m$ the number of query predicates.

\begin{lemma}
If a plan is optimal at selectivity vector $s$ then it is $\alpha$-optimal in the cube bounded by $s\alpha^{-1/(2m)}$ and $s\alpha^{1/(2m)}$.\label{minCubeLemma}
\end{lemma}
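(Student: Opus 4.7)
The plan is to reduce the lemma to a single application of Theorem~\ref{nearOptTheorem} at the corners of the claimed cube. Setting $lb := s\alpha^{-1/(2m)}$ and $ub := s\alpha^{1/(2m)}$, it suffices to verify the hypothesis $c(p,ub) \leq \alpha \cdot c^{*}(lb)$, where $c(p,\cdot)$ denotes the cost of $p$ and $c^{*}(\cdot)$ the optimal cost. Once this is established, Theorem~\ref{nearOptTheorem} (which internally relies on monotonicity of cost in the selectivity) extends $\alpha$-optimality of $p$ to every selectivity vector inside the cube, so no further pointwise argument is needed.

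First I would upper-bound $c(p,ub)$. Moving from $s$ to $ub$ scales every component of the selectivity vector by the same factor $\alpha^{1/(2m)}$, so MSA yields $c(p,ub) \leq (\alpha^{1/(2m)})^{m} \cdot c(p,s) = \alpha^{1/2}\cdot c(p,s)$. Next I would lower-bound $c^{*}(lb)$. Let $p^{*}$ attain the optimum at $lb$. Moving from $lb$ to $s$ also scales every component by $\alpha^{1/(2m)}$, so MSA gives $c(p^{*},s) \leq \alpha^{1/2}\cdot c(p^{*},lb) = \alpha^{1/2}\cdot c^{*}(lb)$. Since $p$ is optimal at $s$, $c(p,s) \leq c(p^{*},s)$, and combining the two inequalities yields $c^{*}(lb) \geq c(p,s)/\alpha^{1/2}$. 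Multiplying by $\alpha$ and chaining with the first bound gives $\alpha\cdot c^{*}(lb) \geq \alpha^{1/2}\cdot c(p,s) \geq c(p,ub)$, which is exactly the hypothesis of Theorem~\ref{nearOptTheorem}.

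The only subtlety I anticipate is that MSA is stated for uniform multiplicative scaling of the whole selectivity vector, not for arbitrary directions. This is why I parameterise both cube corners as scalar multiples of $s$: each invocation of MSA is then legitimately a uniform rescaling by $\alpha^{\pm 1/(2m)}$. Splitting the total factor $\alpha$ symmetrically between the ascent $s\!\to\!ub$ and the descent $lb\!\to\!s$ (each contributing $\alpha^{1/2}$) is precisely what forces the exponent $1/(2m)$ in the statement, so the numerical constant in the lemma is tight for the analysis. Beyond this bookkeeping I do not expect a real obstacle; the proof is essentially a two-line calculation once the cube is parameterised in this multiplicative fashion.
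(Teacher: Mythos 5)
Your proof is correct and follows essentially the same route as the paper's: both bound the cost of $p$ at the upper corner by $\alpha^{1/2}\cdot c(p,s)$ via MSA, lower-bound the optimal cost at the lower corner by $c(p,s)\cdot\alpha^{-1/2}$ via the auxiliary plan optimal there, and then invoke Theorem~\ref{nearOptTheorem}. Your closing remark about why the cube corners must be uniform scalar multiples of $s$ for MSA to apply is a useful clarification that the paper leaves implicit.
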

\begin{proof}
Plan $p$ is in the following the plan that is optimal for selectivity vector $s$ and $c$ the cost of $p$ at $s$. We first calculate a lower cost bound for the entire cube. A different plan than $p$ might be optimal at the lower cube bounds $s\alpha^{-1/(2m)}$. Let $p_L$ denote the plan that is optimal for the lower cube bounds. The cost of $p_L$ for $s$ is lower-bounded by $c$ as $c$ is the optimal cost at $s$. Multiplying the selectivity by factor $\alpha^{-1/(2m)}$ can reduce the plan cost at most by factor $\alpha^{-1/2}$ due to the MSA. Therefore $c\cdot \alpha^{-1/2}$ is a lower bound for the cost at the lower cube bound which lower-bounds at the same time the cost in the entire cube as we assume that cost functions are monotone in the selectivity values. The cost of $p$ at the upper cube bound is an upper bound on the cost of $p$ within the entire cube. The cost of $p$ at $s\alpha^{1/(2m)}$ is upper-bounded by $c\cdot\alpha^{1/2}$ due to MSA. It is $c\alpha^{1/2}\leq c\alpha^{-1/2}\cdot\alpha$ and the sufficient condition for having a near-optimality cube (see Theorem~\ref{nearOptTheorem} in Section~\ref{cubeSub}) satisfied.
\end{proof}

The confidence that the true selectivity falls within the cube bounds is for each predicate calculated based on the minimal distance between mean and cube bounds.

 \begin{lemma}
 The plan $p$ that is optimal for the current selectivity means $a$ is $\alpha$-optimal in a cube around $a$ with minimal distance $a\cdot(1-\alpha^{-1/(2m)})$ between $a$ and the cube bounds.
 \end{lemma}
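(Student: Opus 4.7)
The plan is to derive this as a direct specialization of the previous Lemma~\ref{minCubeLemma} (the statement about $s\alpha^{-1/(2m)}$ and $s\alpha^{1/(2m)}$) applied with $s := a$, and then to compare the two per-coordinate distances in order to identify which one is the minimum.

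First I would instantiate Lemma~\ref{minCubeLemma} with the current selectivity means $a$ in the role of $s$. Since $p$ is by assumption optimal at $a$, the lemma immediately yields that $p$ is $\alpha$-optimal in the cube with lower bound $a\alpha^{-1/(2m)}$ and upper bound $a\alpha^{1/(2m)}$. This already establishes the existence of a near-optimality cube centered (in a multiplicative sense) at $a$, so the only remaining task is to read off the minimal coordinate-wise distance between $a$ and the cube boundary.

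Second, I would compute the two candidate distances coordinate-wise: the distance from $a$ down to the lower bound is $a - a\alpha^{-1/(2m)} = a\,(1-\alpha^{-1/(2m)})$, and the distance from $a$ up to the upper bound is $a\alpha^{1/(2m)} - a = a\,(\alpha^{1/(2m)}-1)$. Both are non-negative since $\alpha \geq 1$. To determine which is smaller, I would set $\beta := \alpha^{1/(2m)} \geq 1$ and observe that $(\beta - 1)/\beta \leq \beta - 1$ is equivalent to $1/\beta \leq 1$, which holds. Hence the minimum of the two distances is $a\,(1-\alpha^{-1/(2m)})$, as claimed.

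There is really no hard step here; the statement is essentially a reformulation of Lemma~\ref{minCubeLemma} in terms of additive rather than multiplicative offsets from $a$. The only minor subtlety to be explicit about is that the two bounds in Lemma~\ref{minCubeLemma} are not symmetric around $a$ in the additive sense even though they are in the multiplicative sense, so one must justify which side gives the smaller distance — which is precisely the $1/\beta \leq 1$ step above.
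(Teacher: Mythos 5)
Your proposal is correct and matches the paper's own proof essentially step for step: both instantiate Lemma~\ref{minCubeLemma} at $s=a$, compute the two additive distances $a(1-\alpha^{-1/(2m)})$ and $a(\alpha^{1/(2m)}-1)$, and conclude via the same inequality $1-\alpha^{-1/(2m)}=(\alpha^{1/(2m)}-1)/\alpha^{1/(2m)}\leq\alpha^{1/(2m)}-1$ that the lower bound is the closer one. No differences worth noting.
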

 \begin{proof}
Lemma~\ref{minCubeLemma} specifies upper and lower bounds of a cube in which $p$ is $\alpha$-optimal. The distance between $a$ and the upper cube bound is $a\cdot(\alpha^{1/(2m)}-1)$ and the distance between $a$ and the lower cube bound is $a\cdot(1-\alpha^{-1/(2m)})$. It is $1-1/\alpha^{1/(2m)}=(\alpha^{1/(2m)}-1)/\alpha^{1/(2m)}\leq \alpha^{1/(2m)}-1$ for $\alpha\geq1$. The minimal distance between $a$ and the cube bounds is therefore the distance to the lower cube bounds.
 \end{proof}
 
We calculate the number of samples required to achieve confidence $\delta$. The following analysis is pessimistic since we assume that plans are only near-optimal within the minimal near-optimality cube described in Lemma~\ref{minCubeLemma}. 

We assume that the cube extension algorithm is able to establish that the best plan for the current selectivity means is near-optimal at least within its minimal cube. This seems particularly likely for the slope-based extension scheme (Algorithm~\ref{cubeComplexAlg}): the target cost bounds it aims to achieve for the upper and lower cube bounds correspond precisely to the ones of the minimal selectivity cube if the plan cost functions are as sensitive to selectivity changes as maximally allowed by the MSA and if the plan that is optimal for the selectivity means is also optimal at the lower cube bounds.

We simplify the following analysis by assuming that all predicates have the same selectivity mean $a$ and therefore the same minimal distance $d=a\cdot(1-\alpha^{-1/(2m)})$ between $a$ and the cube bounds, while the generalization is straight forward. We also assume that the selectivity means $a$ remain fixed over the iterations (alternatively, we can say that $a$ denotes the selectivity means in the final iteration without affecting the following analysis). We currently assume a uniform distribution of samples over predicates.

\begin{theorem}
An $\alpha$-optimal plan is found with confidence $\delta$ after taking $m\cdot\log(2/(1-\sqrt[m]{\delta}))/(2\cdot d^2)$ samples.
\end{theorem}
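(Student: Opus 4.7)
The plan is to combine the per-predicate confidence formula from Section~\ref{confidenceSub} with the minimal near-optimality cube established in the preceding lemma, then solve for the number of samples required to push the overall confidence above $\delta$ under uniform sampling.

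First, I would fix notation: let $n$ be the total number of samples taken and, since the sampling scheme is uniform and there are $m$ predicates, let $l = n/m$ be the per-predicate sample count. By assumption all predicates share the same mean $a$, hence the same minimal mean-to-bound distance $d_r = d = a\cdot(1-\alpha^{-1/(2m)})$ for every $r \in pred(q)$. I would also invoke the preceding discussion that the cube extension routine can certify near-optimality at least over this minimal cube, so the confidence expression from Section~\ref{confidenceSub} can be evaluated at these $d_r$.

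Next, I would plug into the confidence formula, obtaining
\[
\prod_{r \in pred(q)} \max\bigl(1 - 2\exp(-2 d_r^2 l_r),\, 0\bigr) \;=\; \bigl(1 - 2\exp(-2 d^2 l)\bigr)^m,
\]
assuming $l$ is large enough that the factor inside is nonnegative (if it is not, the bound to be derived is trivially satisfied by taking more samples, so one may assume nonnegativity). I would then require this to be at least $\delta$, which by taking $m$-th roots is equivalent to $1 - 2\exp(-2 d^2 l) \geq \sqrt[m]{\delta}$. Rearranging gives $\exp(-2 d^2 l) \leq (1 - \sqrt[m]{\delta})/2$, and taking logarithms yields the per-predicate bound $l \geq \log\!\bigl(2/(1 - \sqrt[m]{\delta})\bigr) / (2 d^2)$.

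Finally, multiplying by $m$ to convert from per-predicate to total samples yields the claimed bound $n \geq m \cdot \log\!\bigl(2/(1-\sqrt[m]{\delta})\bigr)/(2 d^2)$. There is no real obstacle in this argument — the pessimistic modeling choices (identical means, identical minimal distances, and using only the minimal near-optimality cube from Lemma~\ref{minCubeLemma}) have already been made explicit, so the proof is essentially an algebraic inversion of the Hoeffding-based confidence expression. The only subtlety worth flagging in the write-up is justifying that the $\max(\cdot,0)$ truncation can be dropped once $l$ meets the derived threshold, which follows directly from $\sqrt[m]{\delta} \in [0,1]$.
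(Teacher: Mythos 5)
Your proposal is correct and follows essentially the same route as the paper's own proof: reduce the product confidence to a per-predicate requirement of $\sqrt[m]{\delta}$, invert the Hoeffding-based expression to get the per-predicate sample count, and multiply by $m$. Your additional remark about dropping the $\max(\cdot,0)$ truncation is a harmless refinement the paper leaves implicit.
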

\begin{proof}
The algorithm terminates once the confidence that the true selectivity values lie within the minimal near-optimality cube reaches the threshold $\delta$. This is the case once the confidence for each of the $m$ predicates that its selectivity lies within the corresponding cube bounds reaches value $\sqrt[m]{\delta}$. Hence we require $1-2\exp(-2\cdot d^2\cdot l)\geq\sqrt[m]{\delta}$ where $l$ denotes the number of samples taken for that predicate. This leads to the condition $\exp(-2\cdot d^2\cdot l)\leq(1-\sqrt[m]{\delta})/2$ and $-2\cdot d^2\cdot l\leq\log((1-\sqrt[m]{\delta})/2)$ and finally $l\geq\log(2/(1-\sqrt[m]{\delta}))/(2\cdot d^2)$. This formula specifies the required samples per predicate so we need to multiply by $m$ to obtain the total number of required samples.
\end{proof}

The required number of samples grows in the confidence threshold $\delta$. It grows once the minimal distance between mean and cube bounds decreases; the minimal distance increases in $\alpha$. As expected, requiring a higher confidence or using a stricter definition of near-optimality increases the required number of samples. Our formula for the number of samples also grows for lower selectivity values (as the minimal distance is monotone in the selectivity). Note however that our formula establishes an upper bound on the required number of samples under pessimistic assumptions. The actual number of required samples might be less sensitive to low selectivity values and we will show in our experiments that our algorithms perform well for relatively low selectivity values of few per mille in average. More importantly, low selectivity values mean low execution cost as the size of intermediate results becomes very small. We discussed the natural extension of our algorithm to terminate iterations once the estimated execution cost becomes too low (e.g., with  confidence $\delta$). Low selectivity values lead to low execution cost and would trigger early termination. 

Also, there are specialized sampling methods that reduce sampling effort for predicates with low selectivity in specific domains. A counting based selectivity estimation method has for instance been proposed recently in the context of crowd database systems~\cite{Marcus2012} that can reduce the required number of samples significantly, in particular for image recognition tasks.

Note that the number of samples does not at all depend on the size of the processed data! The number of required samples will therefore become more and more negligible the bigger the data set on which the query operates is.

We denote by $u(\alpha,\delta,m)$ the upper bound on the required number of samples in the following. We finally study how the required number of samples translates into the actual number of samples taken, the number of iterations and the number of optimizer calls, for the different sampling strategies described in Section~\ref{samplingSub}.

\begin{theorem}
The number of samples, optimizer calls, and iterations are all in $O(u(\alpha,\delta,m))$ for uniform sampling.
\end{theorem}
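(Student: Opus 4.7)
The plan is to combine the previous theorem (which upper-bounds the number of samples needed to reach confidence $\delta$ by $u(\alpha,\delta,m)$) with the structural properties of uniform sampling to bound each of the three quantities in turn. Let $k$ denote the fixed number of samples taken per iteration by the uniform strategy and let $b$ denote the per-iteration budget of optimizer calls used by \textproc{NearOptCube} (both are constants independent of $\alpha$, $\delta$, and $m$).

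First I would argue the iteration count. By the previous theorem, the loop terminates as soon as the cumulative number of samples reaches $u(\alpha,\delta,m)$, because at that point the confidence threshold is guaranteed to be met. Since uniform sampling adds exactly $k$ samples per iteration, the number of iterations executed before termination is at most $\lceil u(\alpha,\delta,m)/k\rceil + 1$ (the extra summand accounts for the fact that the confidence check in Algorithm~\ref{paoAlg} happens at the end of an iteration, so one additional iteration may be needed after reaching the sample threshold). Since $k$ is constant, this is $O(u(\alpha,\delta,m))$.

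Next, I would bound the total number of samples and optimizer calls in terms of the iteration count. The total number of samples taken is exactly $k$ times the number of iterations (plus the constant-size initial set from \textproc{InitSamples}), hence $O(u(\alpha,\delta,m))$. The total number of optimizer calls is at most $b+1$ per iteration (the single \textproc{BestPlan} call on line for the current means, plus at most $b$ calls inside \textproc{NearOptCube}), which is a constant per iteration, so the total is again $O(u(\alpha,\delta,m))$.

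The only subtle point—and the main thing to be careful about—is the constant overhead contributed by \textproc{InitSamples} and by the potential overshoot in the final iteration; both are constants or lower-order terms and are absorbed into the $O(\cdot)$. The rest is essentially arithmetic bookkeeping on the per-iteration budgets, with no deeper obstacle, because the hard quantitative work of turning the PAO guarantee into a sample bound has already been done in the preceding theorem.
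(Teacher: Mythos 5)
Your proof is correct and takes essentially the same approach as the paper, whose own proof is just the two-line observation that uniform sampling takes a constant number of samples per iteration and that the number of optimizer calls per iteration is bounded by a constant. Your version merely spells out the bookkeeping (dividing the sample bound $u(\alpha,\delta,m)$ by the per-iteration sample count $k$ to bound iterations, then multiplying back by the per-iteration constants) that the paper leaves implicit.
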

\begin{proof}
Uniform sampling takes a constant number of samples per iteration and the number of optimizer calls per iteration is bounded by a constant, too. 
\end{proof}

We denote the multiplicative factor used by the exponential sampling strategy by $\rho$ in the following.

\begin{theorem}
The number of iterations and optimizer calls are both in $O(\log_{\rho}u(\alpha,\delta,m))$ while the number of samples is in $O(\rho\cdot u(\alpha,\delta,m))$ for exponential sampling.
\end{theorem}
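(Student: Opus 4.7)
The plan is to track the cumulative sample count as a geometric sequence in the iteration index, terminate as soon as it exceeds the worst-case threshold $u(\alpha,\delta,m)$ established above, and then read off the three quantities of interest.

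First I would set up the recurrence. Let $c_0$ denote the fixed number of samples taken in the first iteration (the same as for uniform sampling), and let $h_i$ denote the total number of samples collected after iteration $i$. By definition of exponential sampling, iteration $i\geq 2$ adds $\rho\cdot h_{i-1}$ samples, giving $h_i=(1+\rho)\,h_{i-1}$, so $h_i=(1+\rho)^{i-1}c_0$. Since the strategy still distributes samples uniformly over the $m$ predicates, each predicate has received $h_i/m$ samples after iteration $i$, exactly as in the uniform-sampling analysis. The confidence threshold is therefore reached once $h_i\geq u(\alpha,\delta,m)$; let $i^\star$ be the first iteration where this happens.

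Next I would bound $i^\star$. From $(1+\rho)^{i^\star-1}c_0\geq u(\alpha,\delta,m)$ we get $i^\star\leq 1+\log_{1+\rho}(u(\alpha,\delta,m)/c_0)=O(\log_\rho u(\alpha,\delta,m))$, using that $\log_{1+\rho}$ and $\log_\rho$ differ only by a constant factor for fixed $\rho$. Since Algorithm~\ref{paoAlg} and the cube-finding routine together issue only a constant number of optimizer calls per iteration (bounded by the budget $b$), the number of optimizer calls is also $O(\log_\rho u(\alpha,\delta,m))$.

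Finally I would bound the total samples by using that $i^\star$ is the \emph{first} iteration in which the threshold is crossed: $h_{i^\star-1}<u(\alpha,\delta,m)$, hence $h_{i^\star}=(1+\rho)\,h_{i^\star-1}<(1+\rho)\,u(\alpha,\delta,m)=O(\rho\cdot u(\alpha,\delta,m))$. The main subtlety — and the only real obstacle — is justifying that the threshold $u(\alpha,\delta,m)$ from the previous theorem really applies here: this is immediate because $u$ was derived purely from the per-predicate count needed to make the Hoeffding bound reach $\sqrt[m]{\delta}$, and exponential sampling still allocates samples uniformly across predicates, so crossing $h_i\geq u(\alpha,\delta,m)$ guarantees the confidence condition in line~\ref{mainCalcConfLine} is satisfied.
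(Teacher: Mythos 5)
Your proposal is correct and follows essentially the same route as the paper's own proof: model the cumulative sample count as a geometric sequence growing by factor $(1+\rho)$ per iteration, bound the number of iterations by $O(\log_{1+\rho}u)=O(\log_\rho u)$ with optimizer calls proportional via the per-iteration budget, and bound the final sample count by $(1+\rho)\,u(\alpha,\delta,m)=O(\rho\cdot u(\alpha,\delta,m))$. Your explicit ``first crossing'' argument for the last step is a slightly cleaner articulation of what the paper expresses as $(\rho+1)^{\lceil\log_{\rho+1}u\rceil}$, but the substance is identical.
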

\begin{proof}
The total number of samples taken in each iteration equals the total number of samples taken so far multiplied by factor $\rho$. Hence the total number of samples increases by factor $\rho+1$ after each iteration. The number of iterations is in $O(\log_{\rho+1}u(\alpha,\delta,m))=O(\log_{\rho}u(\alpha,\delta,m))$ and the number of optimizer calls is proportional. The number of iterations determines the number of samples which is $(\rho+1)^{\lceil\log_{\rho+1}u(\alpha,\delta,m)\rceil}\in O(\rho\cdot u(\alpha,\delta,m))$.
\end{proof}

Exponential sampling reduces the asymptotic number of iterations and optimizer invocations significantly while it increases the number of samples by factor $\rho$. Choosing a higher value for $\rho$ decreases the number of iterations and optimizer calls while it may increase the number of samples. Note that our analysis is based on the simplifying assumptions that a plan is only near-optimal within its minimal cube and that this cube is reliably discovered in each iteration. We will see in our experiments that the number of samples increases sometimes by more than factor $\rho$ when switching from uniform to exponential sampling.

Adaptive sampling tries to calculate the required number of samples for each predicate based on the current selectivity estimates. This results in one iteration and a constant number of optimizer calls under the (in practice overly optimistic) assumption that the selectivity values do not change significantly enough to invalidate the forecast. Based on that property, we can however certainly expect a relatively low number of iterations and optimizer calls. In addition, the adaptive sampling scheme is the only one that does not distribute samples uniformly over different predicates. This means that the number of samples taken by the adaptive scheme is for instance guaranteed to be lower than the one taken by all other schemes at least by factor $m$ in cases where selectivity matters only for one out of the $m$ predicates (e.g., if all predicates but one are defined on small tables whose processing cost is negligible).
\section{Experimental Evaluation}
\label{experimentalSec}

We experimentally compare the different variants of the PAO algorithm described in Section~\ref{algorithmSec} against each other. 

Section~\ref{setupSub} describes the experimental setup. Section~\ref{extensionExpSub} presents the result of a comparison between different algorithms for finding selectivity space regions and Section~\ref{samplingExpSub} compares different predicate sampling strategies. 

\def\alphaLow{3}
\def\alphaHigh{5}
\def\deltaLow{0.5}
\def\deltaHigh{0.9}

\def\plotTitle#1#2#3{
\node at (group c#1r#2.north) [yshift=0.15cm] {#3};
}

\def\regionResultPlot#1#2#3#4{
\begin{groupplot}[group style={group size=2 by 2, x descriptions at=edge bottom, ylabels at=edge left, vertical sep=0.35cm}, width=3.5cm, height=2.7cm, xlabel near ticks, ylabel near ticks, xlabel=\# Predicates, ylabel=#1, ybar=0pt, ymode=log, ymajorgrids, xtick align=inside,
legend style={font=\footnotesize}]
\nextgroupplot[bar width=0.1cm]
\foreach \i in {2,3,4}{
\addplot table[x index=1, y index=\i] {plotsdata/regionPlots/#2#3#4.txt};
}
\nextgroupplot[bar width=0.1cm,legend pos=outer north east, legend columns=1]
\foreach \i in {8,9,10}{
\addplot table[x index=1, y index=\i] {plotsdata/regionPlots/#2#3#4.txt};
}
\addlegendentry{Slope}
\addlegendentry{FR($10^{-2}$)}
\addlegendentry{FR($10^{-3}$)}
\nextgroupplot[bar width=0.1cm]
\foreach \i in {14,15,16}{
\addplot table[x index=1, y index=\i] {plotsdata/regionPlots/#2#3#4.txt};
}
\nextgroupplot[bar width=0.1cm]
\foreach \i in {20,21,22}{
\addplot table[x index=1, y index=\i] {plotsdata/regionPlots/#2#3#4.txt};
}

\end{groupplot}

\plotTitle{1}{1}{{$\alpha=$\alphaLow, $\delta=$\deltaLow}}
\plotTitle{2}{1}{{$\alpha=$\alphaLow, $\delta=$\deltaHigh}}
\plotTitle{1}{2}{{$\alpha=$\alphaHigh, $\delta=$\deltaLow}}
\plotTitle{2}{2}{{$\alpha=$\alphaHigh, $\delta=$\deltaHigh}}
}

\def\samplingResultPlot#1#2#3#4{
\begin{groupplot}[group style={group size=2 by 2, x descriptions at=edge bottom, ylabels at=edge left, vertical sep=0.35cm}, width=3.5cm, height=2.7cm, xlabel near ticks, ylabel near ticks, xlabel=\# Predicates, ylabel=#1, ybar=0pt, ymode=log, ymajorgrids, xtick align=inside,
legend style={font=\footnotesize}]
\nextgroupplot[bar width=0.1cm]
\foreach \i in {2,3,4,5}{
\addplot table[x index=1, y index=\i] {plotsdata/samplingPlots/#2#3#4.txt};
}
\nextgroupplot[bar width=0.1cm, legend pos=outer north east, legend columns=1]
\foreach \i in {6,7,8,9}{
\addplot table[x index=1, y index=\i] {plotsdata/samplingPlots/#2#3#4.txt};
}
\addlegendentry{Uni}
\addlegendentry{Exp(0.5)}
\addlegendentry{Exp(1)}
\addlegendentry{Adapt}
\nextgroupplot[bar width=0.1cm]
\foreach \i in {10,11,12,13}{
\addplot table[x index=1, y index=\i] {plotsdata/samplingPlots/#2#3#4.txt};
}
\nextgroupplot[bar width=0.1cm]
\foreach \i in {14,15,16,17}{
\addplot table[x index=1, y index=\i] {plotsdata/samplingPlots/#2#3#4.txt};
}
\end{groupplot}

\plotTitle{1}{1}{{$\alpha=$\alphaLow, $\delta=$\deltaLow}}
\plotTitle{2}{1}{{$\alpha=$\alphaLow, $\delta=$\deltaHigh}}
\plotTitle{1}{2}{{$\alpha=$\alphaHigh, $\delta=$\deltaLow}}
\plotTitle{2}{2}{{$\alpha=$\alphaHigh, $\delta=$\deltaHigh}}
}

\def\regionFiguresPack#1#2#3{
\begin{figure}[t!]
\centering
\begin{tikzpicture}
\regionResultPlot{{Iterations}}{iterations}{#2}{#3}
\end{tikzpicture}

\caption{Number of iterations for #1 queries and different extension algorithms.\label{FigIter#2#3}}
\end{figure}

\begin{figure}[t!]
\centering
\begin{tikzpicture}
\regionResultPlot{{Calls}}{calls}{#2}{#3}
\end{tikzpicture}

\caption{Number of optimizer calls for #1 queries and different extension algorithms.\label{FigCall#2#3}}
\end{figure}}

\def\samplingFiguresPack#1#2#3#4{
\begin{figure}[t!]
\centering
\begin{tikzpicture}
\samplingResultPlot{{Iterations}}{iterations}{#2}{#3}
\end{tikzpicture}

\caption{Number of iterations for #1 queries and different sampling schemes.\label{FigSampleIter#2#3}}
\end{figure}

\begin{figure}[t!]
\centering
\begin{tikzpicture}
\samplingResultPlot{{Samples}}{samples}{#2}{#3}
\end{tikzpicture}

\caption{Number of samples for #1 queries and different sampling schemes.\label{FigSampleSample#2#3}}
\end{figure}

#4

\begin{figure}[t!]
\centering
\begin{tikzpicture}
\samplingResultPlot{{Calls}}{calls}{#2}{#3}
\end{tikzpicture}

\caption{Number of optimizer calls for #1 queries and different sampling schemes.\label{FigSampleCalls#2#3}}
\end{figure}
}

\subsection{Experimental Setup}
\label{setupSub}

We compare algorithms in terms of the number of iterations they perform, the number of standard optimizer calls that they require, and in terms of the total number of samples that they collect for solving a given PAO instance. We compare algorithms on randomly generated queries and use the query generation method described by Steinbrunn et al.~\cite{Steinbrunn1997} (which was reused in other publications~\cite{Bruno}). This generation method results in predicates with rather moderate selectivity (median selectivity over 100 generated queries joining 10 tables was around 3 per mille). After generating a query, we randomly pick a given number of predicates for which no selectivity values are provided to the optimizer. Our optimization algorithm can only approximate those selectivity values via sampling  while selectivity estimates for the other predicates are known. We simulate evaluating predicates on data samples by flipping a coin for each tuple with a success probability that corresponds to the selectivity that was determined at query generation. 

We generate queries with different numbers of joined tables and predicates. We performed one series of experiments on chain queries and one on star queries; the results show the same tendencies and hence we report only the results for chain queries in the following. We compare algorithms for different approximation factors $\alpha$ and for different confidence thresholds $\delta$. Selectivity estimates are initialized by sampling each predicate until the first tuple satisfying the predicate is encountered~\cite{Karanasos2014a}. The number of samples taken per iteration by uniform sampling corresponds to 0.25 percent of the average number of predicate evaluations required for the first join. Exponential sampling takes the same number of samples in the first iteration. The maximal number of samples per iteration is restricted to 100 times that number for adaptive sampling. The number of standard optimizer invocations per iteration is restricted to 10. 

All algorithms were implemented in Java~1.7 and executed on an iMac with  i5-3470S 2.90GHz CPU and 16~GB of RAM. 

\subsection{Comparing Cube Extension Algorithms}
\label{extensionExpSub}

\regionFiguresPack{5-table chain}{Small}{Chain}
\regionFiguresPack{10-table chain}{Big}{Chain}


We compare the two algorithms presented in Section~\ref{cubeSub} for finding near-optimality cubes using a given number of standard optimizer invocations. ``Slope'' designates the extension scheme that exploits the slope of the plan cost function (see Algorithm~\ref{cubeComplexAlg}) while ``FR($\epsilon$)'' designates the naive version (see Algorithm~\ref{cubeNaiveAlg}) that uses a fixed extension ratio defined by $\epsilon$. We experimented with different values for $\epsilon$ in preliminary experiments and report only results for the two best configurations in the following plots. All algorithms use the uniform sampling scheme in this subsection. 

Figures~\ref{FigIterSmallChain} to \ref{FigCallBigChain} compare the algorithms in terms of how many iterations are necessary (which is at the same time proportional to the number of required samples as we use uniform sampling) and how many standard optimizer calls are necessary. We compare algorithms for two query sizes and four different combinations of values for $\alpha$ and $\delta$ and for four different numbers of parameters with unknown selectivity values. Each data point represents the arithmetic average over 25 test cases. We interrupted iterations of the fixed ratio algorithm once the number of iterations exceeds the iterations required by Slope by more than one order of magnitude (this happened in average for one out of 25 test cases). Hence the reported numbers for FR correspond to lower bounds on the real values.


Slope wins according to the number of iterations (and samples) and according to the number of optimizer calls. The difference is significant (note the logarithmic axis) and shows the impact of the improvements that were discussed in Section~\ref{cubeSub}. FR is less efficient than Slope at approximating the near-optimality region of a given plan. This means that FR requires tighter confidence bounds for the selectivity values before it can terminate and this results in the increased number of iterations. 

The number of required iterations as well as the required number of optimizer calls increase for all algorithms once $\delta$ increases or $\alpha$ decreases as both corresponds to stronger probabilistic guarantees. Increasing the number of predicates increases the number of iterations and optimizer calls as well. Note however that Slope can easily optimize 10-table queries with eight predicates having unknown selectivity values, using only around 1100 standard optimizer calls even for the strongest probabilistic guarantees. Algorithms for parametric query optimization~\cite{D.2008, Bizarro2009}, where the goal is to find optimal plans for all possible selectivity value combinations, are for instance typically evaluated on lower numbers of selectivity parameters and incur higher computational overhead. Our approach for cube extension is tailored to our scenario as we only approximate the near-optimality region of one single plan instead of finding near-optimal plans for each region in the selectivity space. In the next subsection, we will see how the number of iterations can be reduced by more sophisticated sampling schemes.

\subsection{Comparing Sampling Schemes}
\label{samplingExpSub}

\samplingFiguresPack{5-table chain}{Small}{Chain}{}
\samplingFiguresPack{10-table chain}{Big}{Chain}{}

Figures~\ref{FigSampleIterSmallChain} to \ref{FigSampleCallsBigChain} compare the sampling strategies that were presented in Section~\ref{samplingSub} over different query sizes, predicate counts, and PAO guarantees. ``Uni'' denotes the uniform sampling strategy, ``Exp($\rho$)'' is the exponential sampling strategy with multiplicative factor $\rho$ and ``Adapt'' designates the adaptive sampling strategy in those plots. 

The comparison shows that different sampling strategies are optimal according to different cost metrics. Uniform sampling requires the lowest number of samples but the highest number of iterations and optimizer calls. It is preferable if sampling is rather expensive compared with optimization overhead and if the sampling infrastructure offers a low degree of parallelism. The other sampling schemes reduce the number of required iterations however often by one order of magnitude or more for elevated number of predicates and high confidence requirements. 

The exponential sampling strategy realizes different tradeoffs between the number of optimizer calls and iterations and the number of samples depending on the choice for $\rho$. Choosing a higher value for $\rho$ decreases the number of iterations and optimizer calls while it usually increases the number of samples. Comparing with the other sampling strategies, the exponential sampling strategies offer often a good tradeoff between number of optimizer invocations and samples. 

The adaptive sampling scheme usually requires the lowest number of iterations and optimizer calls among all sampling strategies. This shows that estimating how many samples are required to meet the confidence requirements is helpful even if the theoretical optimum of only one iteration is not reached. On the other hand, adaptive sampling has often the highest number of samples for moderate number of query predicates. Once the number of samples becomes elevated (at least four predicates for high confidence requirements and at least six for low confidence requirements), adaptive sampling starts requiring less samples than the exponential sampling strategies. This shows the benefit of distributing samples non-uniformly over predicates. 

Altogether we recommend uniform sampling to minimize the number of samples, the exponential strategies to achieve a good and tunable tradeoff between the number of iterations and samples for queries with moderate number of predicates, and the adaptive sampling strategy to minimize the number of iterations and for queries with many predicates.

\section{Conclusion}
\label{conclusionSec}

We introduced probably approximately optimal query optimization and a first corresponding algorithm. 



\begin{tiny}
\bibliographystyle{abbrv}

\end{tiny}



\end{document}